	\theoremstyle{definition} 
	\newtheorem{theorem}{Theorem}
	\definecolor{ashgrey}{rgb}{0.5, 0.5, 0.5}
	\lstdefinelanguage{Prism}{ 
        basicstyle=\color{black}\tiny\ttfamily, 
        keywords= {bool, C, ceil, const, ctmc, double, dtmc, endinit, endmodule, endrewards, endsystem, F, false, floor, formula, G, global, I, init, int, label, max, mdp, min, module, nondeterministic, P, Pmin, Pmax, prob, probabilistic, R, rate, rewards, Rmin, Rmax, S, stochastic, system, true, U, X}, 
        keywordstyle={\bfseries\color{black}}, 
        numberstyle=\tiny\color{black}, 
        comment=[l] {//}, morecomment=[s]{/*}{*/}, 
        commentstyle= \color{ashgrey}, 
        tabsize=4, 
        captionpos=b, 
        escapechar=@ 
}
    \newacronym{agv}{AGV}{Autonomous Ground Vehicle}
    \newacronym{aop}{AOP}{Agent Oriented Programming}
    \newacronym{asv}{ASV}{Autonomous Surface Vehicle}
    \newacronym{auv}{AUV}{Autonomous Underwater Vehicle}
    \newacronym{bdi}{BDI}{Belief-Desire-Intention}
    \newacronym{cat}{CAT}{Cognitive Agent Toolbox}
    \newacronym{cdt}{CDT}{Constrained Delaunay Triangulation}
    \newacronym{cog}{COG}{Centre Of Gravity}
    \newacronym{dof}{DOF}{Degrees Of Freedom}
    \newacronym{dtmc}{DTMC}{Discrete-Time Markov Chain}
    \newacronym{ekf}{EKF}{Extended Kalman Filter}
    \newacronym{gps}{GPS}{Global Positioning System}
    \newacronym{gui}{GUI}{Graphical User Interface}
    \newacronym{hs}{HS}{Hybrid System}
    \newacronym{imu}{IMU}{Inertial Measurement Unit}
    \newacronym{ivp}{IvP}{Interval Programming}
    \newacronym{lct}{LCT}{Local Clearance Triangulation}
    \newacronym{mcmas}{MCMAS}{Model Checker for Multi-Agent Systems}
    \newacronym{mol}{MOL}{Machine Ontology Language}
    \newacronym{moos}{MOOS}{Mission Oriented Operating Suite}
    \newacronym{nasa}{NASA}{National Aeronautics and Space Administration}
    \newacronym{nlp}{NLP}{Natural Language Programming}
    \newacronym{oop}{OOP}{Object Oriented Programming}
    \newacronym{pact}{PACT}{Pilot Authority and Control of Tasks}
    \newacronym{prs}{PRS}{Procedural Reasoning System}
    \newacronym{rrt}{RRT}{Rapidly-exploring Random Tree}
    \newacronym{sis}{SIS}{Sequential Importance Sampling}
    \newacronym{slam}{SLAM}{Simultaneous Localization And Mapping}
    \newacronym{smcp}{SMCP}{Standard Marine Communication Phrases}
    \newacronym{ukf}{UKF}{Unscented Kalman Filter}
    \newacronym{usv}{USV}{Unmanned Surface Vehicle}
    \newacronym{uav}{UAV}{Unmanned Aerial Vehicle}
    \newacronym{uuv}{UUV}{Unmanned Underwater Vehicle}
    \newacronym{vr}{VR}{Virtual Reality}
\title{\LARGE \bf
 	Reducing complexity of autonomous control agents for verifiability
 }
\author{Paolo Izzo$^{1}$, Hongyang Qu$^{2}$ and Sandor M. Veres$^{3}$
\thanks{*This work was supported by Thales UK and The University of Sheffield }
\thanks{$^{1}$Paolo Izzo is a PhD student at ACSE, The University of Sheffield {\tt \small pizzo1@sheffield.ac.uk}}%
\thanks{$^{2}$ Dr Hongyang Qu is Senior Research Fellow at ACSE, The University of Sheffield,{\tt \small h.qu@sheffield.ac.uk}}%
\thanks{$^{3}$Sandor M. Veres is a Professor of Autonomous Systems at ACSE, The University of Sheffield, {\tt\small s.veres@sheffield.ac.uk}}%
}
\begin{document}

\maketitle
\thispagestyle{empty}
\pagestyle{empty}

\begin{abstract}

The AgentSpeak type of languages are considered for decision making in autonomous control systems. To reduce the complexity and increase the verifiability of decision making, a limited instruction set agent (LISA) is introduced. The new decision method is structurally simpler than its predecessors and easily lends itself to both design time and runtime verification methods. The process of converting a control agent in LISA into a model in a probabilistic model checker is described. Due to the practical complexity of design time verification the feasibility of runtime probabilistic verification is investigated and illustrated in the LISA agent programming system for verifying symbolic plans of the agent using a probabilistic model checker. 

\end{abstract}

\section{Introduction}
\label{sec:intro}



In control sciences the concept of autonomous control has emerged as an upgrade of feedback control, where the controller also decides what and how to control  in terms of objects, parameters and performance \cite{astrom92,astrom2010}. 

Attempts towards autonomous decision-making software were initially made through \gls{oop}. A first example of \gls{oop}, designed with decision-making in mind, is probably  given by Ng and Luk in \cite{ng1995}. A more recent example of research in this direction can be found in \cite{ridao2002}, where Ridao presents a layered \gls{oop} control architecture, with deliberative, control execution and reactive layers. For autonomous control, the \gls{oop} framework is usually linked to \glspl{hs} modelling software. A few examples   applications can be found in \cite{balarin2002,pinto2005,silva2001}.


As a next stage of autonomous decision making, formal description of autonomous agents can be found in \cite{veres2011,wooldridge2002,wooldridge1995}. Most architectures are structured in a layered way, as described in references \cite{gat1998,alami1998,simmons1998}, namely a \emph{Deliberator layer}, a \emph{Sequencer layer} and a \emph{Controller layer}, with different levels of interaction between the layers according to the structure of the system. An exception against this trend that is worth mentioning is the CLARAty architecture \cite{nesnas2006,volpe2001} developed by \acrshort{nasa} where the Deliberator layer and the Sequencer layer are merged together in continuous replanning schemes.

To beat the complexity of decision making which arose, one of the earliest and best known {\it behaviour-based architectures} was Brook's \emph{subsumption architecture} \cite{brooks1990}. An interesting recent application of this kind of architecture is the MOOS-IvP project \cite{moosivpwebsite,benjamin2012}, which is mostly implemented for \glspl{uuv} and \glspl{usv}. \gls{moos} \cite{newman2003} is an autonomy \emph{middleware}, structured in a star-like fashion. Every application only interfaces with a central database, and the inter-process communication between applications happens with a Publish/Subscribe policy. \gls{ivp} is a \gls{moos} application that implements a behaviour-based decision-making engine \cite{benjamin2004}. 
At the end of every reasoning cycle, the engine solves a multi-objective optimisation problem over all the objective functions generated by the behaviours.

Another popular ''anthropomorphic'' approach to the implementation of autonomous agents is the \gls{bdi} architecture which are implemented in programming \cite{bordini2007,veres2011}. \gls{bdi} agent architectures are characterised by three large abstract sets: \emph{Beliefs}, \emph{Desires} and \emph{Intentions}. 
The beliefs set represents the information the agent has about the world, the desires set represents something that the agent \emph{might} want to accomplish and the intentions set represents the set of options that the agent is committed to work towards. The most known implementations of the \gls{bdi} architecture are the \gls{prs} \cite{georgeff1986,georgeff1987} and \emph{AgentSpeak} \cite{rao1996}. AgentSpeak, and in particular \emph{Jason} \cite{bordini2007,jasonmanual} and Jade\cite{bellifemine2000,bellifemine2007}, fully embrace the philosophy of \gls{aop} \cite{shoham1993}, offering a Java based interpreter that can be customised according to the designer needs. 

In  this paper  we aim to analyse the Jason reasoning cycle to outline its design disadvantages for verification complexity and propose a new architecture called Limited Instruction Set Agent (LISA), which is based on previous expansions of AgentSpeak such as Jason and Jade, while relying more on external planning processes, abstractions from planning and optimisation for decision making called by the agent. We  also propose to use model checking techniques at
two levels: design-time   and at run-time. For the design time model checking we prove that LISA is  implementable as a \gls{dtmc} and that probabilistic model checking is applicable.
Given the lack of definition of the plan selection function in Jason, our idea is to improve the architecture with a run-time probabilistic model checking by predicting the outcome of applicable plans and actions.

The remaining sections of the paper provide an abstract model for the reasoning cycle in Jason and {LISA}, abstractions to DTMC, principles of applicable plan selection, how to use model checking runtime  and conclusions complete the paper. 

\section{The agent reasoning cycle}
\label{sec:background}


By analogy to previous definitions \cite{lincoln2013,wooldridge2002,veres2011} of AgentSpeak-like architectures, we define our agents as a tuple:
\begin{equation}
\label{eq:agent}
\mathcal{R}=\{ \mathcal{F},B,L,\Pi,A\}
\end{equation}
where:
\begin{itemize}
\item 
	$\mathcal{F} = \{p_1,p_2,\ldots,p_{n_p}\}$ is the set of all predicates.
\item
	$B \subset \mathcal{F}$ is the total atomic belief set. The current belief base at time $t$ is defined as $B_t \subset B$. At time $t$ beliefs that are added, deleted or modified are considered \emph{events} and are included in the set $E_t \subset B$, which is called the \emph{Event set}. Events can be either \emph{internal} or \emph{external} depending on whether they are generated from an internal action, in which case are referred to as ``mental notes'', or an external input, in which case are called ``percepts''. 
\item
	$L = \{l_1,l_2,\ldots\,l_{n_l}\}$ is a set of logic-based implication rules.
\item
	$\Pi = \{\pi_1,\pi_2,\ldots,\pi_{n_\pi}\}$ is the set of executable plans or \emph{plans library}. Current applicable plans at time $t$ are part of the subset $\Pi_t \subset \Pi$, this set is also named the \emph{Desire set}. A set $I \subset \Pi$ of intentions is also defined, which contains plans that the agent is committed to execute. 
\item 
	$A = \{a_1,a_2,\ldots,a_{n_a}\} \subset \mathcal{F} \setminus B$ is a set of all available actions. Actions can be either \emph{internal}, when they modify the belief base or generate internal events, or \emph{external}, when they are linked to external functions that operate in the environment.
\end{itemize}


 AgentSpeak like languages, including LISA,  can be fully defined and implemented by listing the following characteristics:
\begin{itemize}
\item \emph{Initial Beliefs}.\\
	The initial beliefs and goals $B_0 \subset F$ are a set of literals that are automatically copied into the \emph{belief base} $B_t$ (that is the set of current beliefs) when the agent mind is first run.
\item \emph{Initial Actions}.\\
	The initial actions $A_0 \subset A$ are a set of actions that are executed when the agent mind is first run. The actions are generally goals that activate specific plans.
\item \emph{Logic rules}.\\
	A set of logic based implication rules $L$ describes \emph{theoretical} reasoning to improve the agent current knowledge about the world.
\item \emph{Executable plans}.\\
	A set of \emph{executable plans} or \emph{plan library} $\Pi$. Each plan $\pi_j$ is described in the form:
	\begin{equation}
		p_j : c_j \leftarrow a_1, a_2, \ldots, a_{n_j}
	\end{equation}
	where $p_j \in B$ is a \emph{triggering predicate}, which allows the plan to be retrieved from the plan library whenever it comes true, $c_j \in B$ is called the \emph{context}, which allows the agent to check the state of the world, described by the current belief set $B_t$, before applying a particular plan, and $a_1, a_2, \ldots, a_{n_j} \in A$ is a list of actions.
\end{itemize}

\begin{figure*}
	\centering
	\includegraphics[width=180mm]{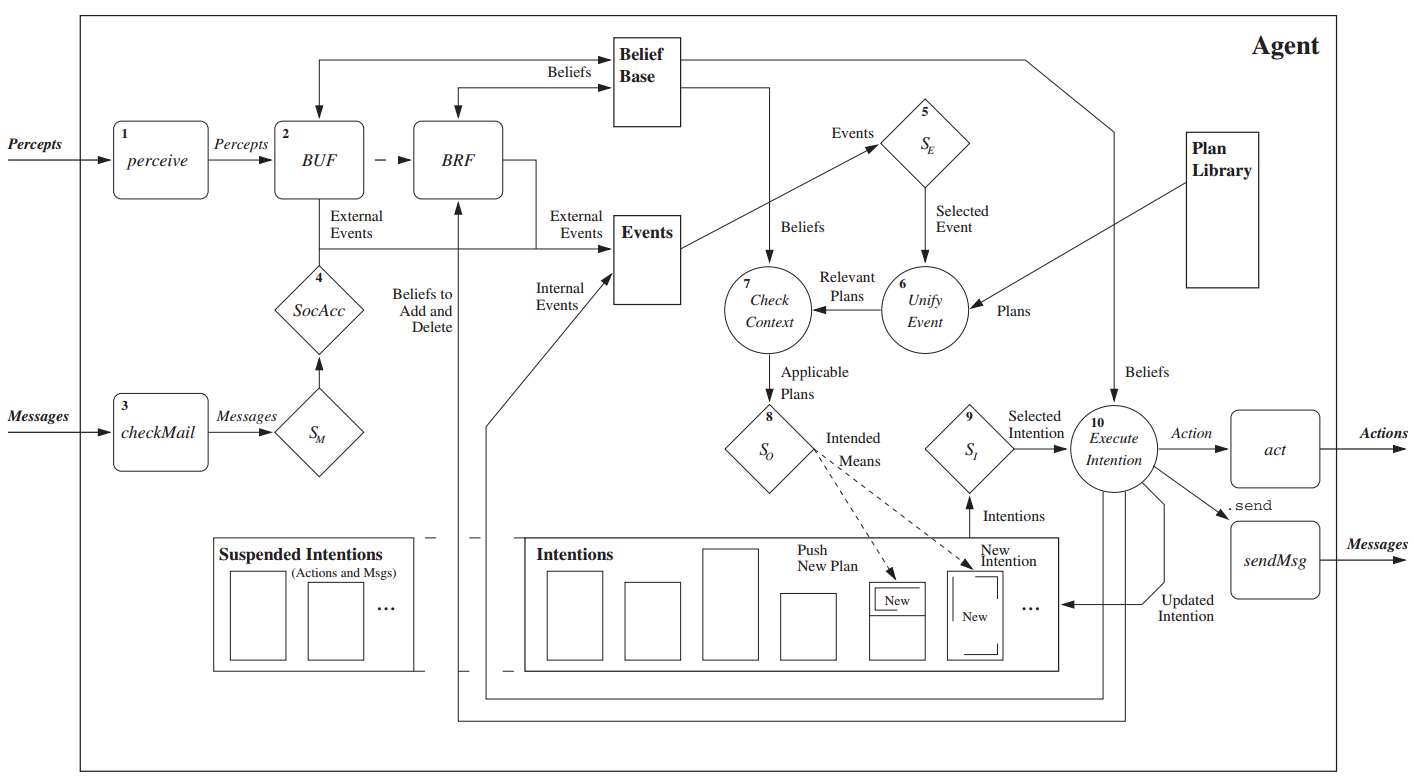}
	\caption{The \emph{Jason} reasoning cycle \textnormal{\cite{bordini2007}}. LISA simplifies this scheme while anabling limited complexity reasoning about future actions by model checking. }
	\label{fig:jason2}
\end{figure*}

For comparison we review the semantics of the reasoning cycle in Jason (see Fig. \ref{fig:jason2} and \cite{jasonmanual,bordini2007}) and similar languages and then we will introduce and build on the simplifications by LISA in the next section. The reasoning cycle of Jason can be summarised by four main steps:

\begin{enumerate}
\item
	\emph{Belief base update}.\\ The agent updates the belief base by retrieving information about the world through perception and communication. The job is done by two functions called \emph{Belief Update Function (BUF)} and \emph{Belief Review Function (BRF)}. The BUF takes care of adding and removing beliefs from the belief base; the BRF updates the set of current events $E_t$ by looking at the changes in the belief base.
\item
	\emph{Trigger Event Selection}.\\ For every reasoning cycle, only a single event can be dealt with. For this reason a function called \emph{Event Selection Function} selects one of the events from the current event set $E_t$:
	\begin{equation}
		S_E : \wp(B) \rightarrow E_t
	\end{equation}
	where $\wp(\cdot)$ is the so called \emph{power operator} and represents the set of all possible subset of a particular set. We will call the current selected event $S_E(E_t)=e_t$. 
\item
	\emph{Plan Selection}.\\ Once the triggering event to be dealt with in the current reasoning cycle is selected, the agent retrieves from the plan library a set of all executable plans that feature $e_t$ (the current event) as triggering event; then these plans are checked for compatible context. All the plan that meet the triggering event and the context are selected as part of an \emph{Applicable Plans} set $\Pi_t$, also called \emph{Desire set}. From the set of applicable plans a function called \emph{Applicable Plan Selection Function} $S_O$ (``O'' stands for Option) selects the plan that will actually be used to pursue the selected goal or deal with the selected event:
	\begin{equation}
		S_O: \wp(\Pi) \rightarrow \Pi
	\end{equation}
	We will call the current selected plan $S_O(\Pi_t)=\pi_t$.
\item
	\emph{Intention Selection}.\\ 
	Once a plan has been selected, the agent is committed to execute and complete it, in order to achieve its goals. A plan that is selected for execution is called \emph{intention} and it is copied in the \emph{Intentions Set} $I$. At every reasoning cycle the agent is only able to execute a single action; this means that the intention base is likely to contain multiple plans at any given time. A function called \emph{intention selection function} $X_I$ selects the plan to be executed in the current reasoning cycle:
	\begin{equation}
		X_I : \wp(\Pi) \rightarrow \Pi
	\end{equation}
	The selected intention $X_I(I)=\pi^{(i)}_t$ is then taken on for execution. 
\item 	\emph{Action Execution}.\\	
	The agent takes the next action from the plan and call external or internal functions to execute it. Once the action is executed it is removed from the Intentions Set. 
\end{enumerate}

The Jason agent architecture does not provide for any articulated implementation of $S_E$, $S_O$ and $X_I$. Furthermore the agent is restricted to deal with a single event per reasoning cycle, which does not give any reasonable computational advantage, and it is arguably an unnecessary limitation considering the accessibility of modern processing units. For these reasons and the complexity of using intention sets, we propose a new simplified architecture called LISA while keeping essential BDI features of programming.


\section{Reasoning cycle in LISA}
\label{sec:lisa}

The purpose of simplifying the reasoning structures relative to AgenSpeak languages in use today, is threefold:

\begin{enumerate}
\item To make complexity of design time verification simpler
\item To enable fitting of model checking for runtime verification of selected plans and actions.
\item To make logic based decisions more robust to mistakes in the a priori design of agent reasoning 
and especially less fragile to badly timed appearance of predicates in the belief base. 
 \end{enumerate}
 
With regards to the definition given in Equation \ref{eq:agent} we make the following distinctions:
\begin{itemize}
\item \emph{Beliefs and Goals}.\\
	In the Jason agent there is a distinction between beliefs and goals. Beliefs are what the agent knows about the world, and goals are special mental notes that the agent does not keep in the belief base when the plan they trigger is achieved. This distinction in a practical sense does not improve the implementation process. For this reason in LISA we drop the distinction between beliefs and goals.
\item \emph{External Actions}.\\
	In LISA we introduce a new classification for external actions that can be either of type \emph{runOnce}, if the action terminates itself after a single execution or \emph{runRepeated}, if the action runs continuously until actively stopped by the agent. The latter implies that the agent is capable of monitoring the outcome of the \emph{runRepeated} actions using its perception processes. In either case the external functions that execute actions can send predicates to the current belief base in the form of \emph{action feedbacks}.
\item \emph{Perception}.\\
	In LISA perception predicates can be of three types: \emph{sensory perception}, \emph{communication} and \emph{action feedbacks}. As in Jason all the perception predicates generate events that can in turn trigger plans that are part of the plan library.
\item \emph{Logic rules}.\\
	In Jason the only way to internally add or remove beliefs from the belief base is with  internal actions. In LISA predicates also arrive from action processes directly.   Logic-based implication rules are not deeply implemented, and the main text itself \cite{bordini2007} advises against their use. In LISA we process logic rules until stable conclusions in each reasoning cycle. 
\end{itemize}

We now present the LISA agent reasoning cycle, highlighting the differences with the Jason reasoning cycle presented in Section \ref{sec:background}. The simplified reasoning cycle is illustrated in Fig. \ref{fig:lisacycle}. Subscript $t$ indicates the indexing of the reasoning cycle.

\begin{enumerate}
\item
	\emph{LISA belief base update}.\\ 
	At the beginning of every reasoning cycle, the BUF checks for all the input coming from perception, action feedback and communications and updates the current belief set $B_t$. The BUF function also removes perception and action feedback predicates if they are not received persistently. For instance indication of a task completed may be a feedback sent only once while sensor based detection of dynamical instability can be a persistent feedback. The BRF generates a set of events $E_t=B_t\setminus B_{t-1}$ at every reasoning cycle. The LISA agent does not select a single trigger for intention but executes all plans with true context  plans in a multi-threaded way.
\item
	\emph{Plan selection}.\\
	Next the agent looks at the current event set $E_t$ and retrieves all the plans from the plan library $\Pi$ that are triggered by these events, it checks that the context meets the current beliefs, and copies the plans to the Desire set $\Pi_t \subset \Pi$.
\item
	\emph{Action Execution}.\\
	The agents retrieves the next action to be executed from each plan  and calls for externals or internal functions to execute the action. Similarly to Jason, a plan is suspended while waiting for a completion feedback from an action.
\end{enumerate}

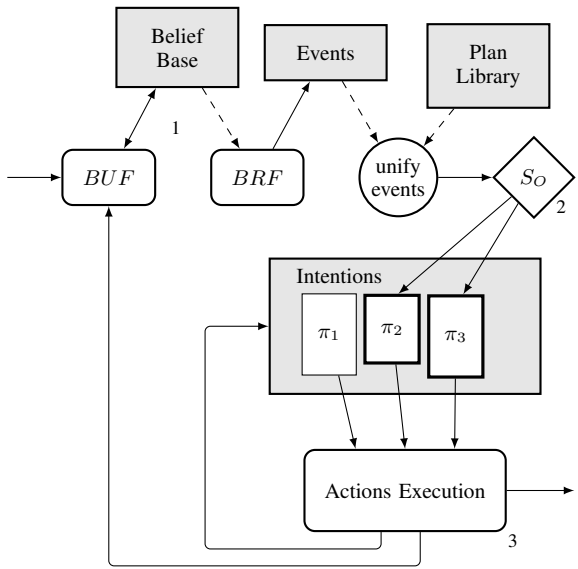
\begin{figure}[htbp]
	\centering

\begin{tikzpicture}[auto,node distance=8mm,>=latex,font=\small,scale=0.9, every node/.style={transform shape}]

\tikzstyle{rect}=[thick,draw=black,inner sep=3mm,fill=white!90!black,text width=12mm,text centered]
\tikzstyle{diamond}=[thick,draw=black,minimum height=12mm,minimum width=12mm,shape=diamond]
\tikzstyle{roundcorners}=[thick,draw=black,rounded corners,inner sep=3mm]
\tikzstyle{round}=[thick,draw=black,inner sep=0mm,text width=10mm, text centered,shape=circle]

	\node (ref1) at (0,0) []{};
	\node (buf) [roundcorners,right= of ref1] {$BUF$};
	\node (brf) [roundcorners,right= of buf] {$BRF$};
	\node (bb) [rect,above right=9mm and -6mm of buf] {Belief Base};
	\node (ue) [round,right= of brf] {unify events};
	\node (events) [rect,above right=10mm and -6mm of brf] {Events};
	\node (So) [diamond,right= of ue] {$S_O$};
	\node (plans) [rect,above right=6mm and 0mm of ue] {Plan Library};
	
	\node (inten) at (60mm,-22mm) [rect,minimum height=20mm,minimum width=40mm,text width=20mm] {};
	\node (inten2) [above left=-5mm and -18mm of inten]{Intentions};
	\node (plan1) [fill=white,draw=black,minimum height=12mm,minimum width=8mm,below left=-15mm and -13mm of inten]{$\pi_1$};
	\node (plan2) [fill=white,draw=black,very thick,minimum height=10mm,minimum width=8mm,above right=-10.5mm and 1mm of plan1]{$\pi_2$};
	\node (plan3) [fill=white,draw=black,very thick,minimum height=12mm,minimum width=8mm,above right=-12.5mm and 1mm of plan2]{$\pi_3$};
	
	\node (act) [minimum height=12mm,roundcorners,below= of inten] {Actions Execution};
	
	\node[above right=1mm and 1mm of buf] (n1) {\footnotesize 1};
	\node[below right=-1mm and -1mm of So] (n2) {\footnotesize 2};
	\node[below right=-1mm and -1mm of act] (n3) {\footnotesize 3};
	
	\draw[->] (ref1) -- (buf);
	\draw[<->] (bb) -- (buf);
	\draw[->,dashed] (bb) -- (brf);
	\draw[->] (brf) -- (events);
	\draw[dashed,->] (events) -- (ue);
	\draw[->] (ue) -- (So);
	\draw[dashed,->] (plans) -- (ue);
	\draw[->] (So) -- (plan3.80);
	\draw[->] (So) -- (plan2.80);
	\draw[->] (plan1) -- (act.140);
	\draw[->] (plan2) -- (act.90);
	\draw[->] (plan3) -- (act.40); 
	\draw[->,rounded corners=2] (act.-120) |- +(-2.5mm,-2.5mm) -| +(-26mm,20mm) |- (inten);
	\draw[->,rounded corners=2] (act.-70) |- +(-5mm,-5mm) -| (buf);
	\draw[->] (act) -- +(25mm,0mm);
	
\end{tikzpicture}
	\caption{Reasoning cycle of LISA: the plans run in a multi-threaded way, avoiding the need of $S_E$ and $S_I$}
	\label{fig:lisacycle}
\end{figure}
The LISA agent reasoning cycle reduces complexity compared to previous
implementations and the agent decision making process can be easily
modelled as a DTMC, with particular attention to the selection of the executable plans, as proven in the next section. 


\section{LISA abstractions to DTMC}
\label{sec:dtmcabstraction}

This section will address the problem of modelling and model checking the decision processes of the LISA architecture as mathematically described in the previous section. 
Model checking is understood here before the application of the agent in the environment, i.e. at design time. 
The objective is to verify the functionality of the agent code in the environment, assuming the software has been verified to precisely deliver its designed functionality in LISA. 
The functionality of LISA is to be verified against a series of queries using probabilistic model checking techniques. 
In this paper, we will be using PRISM \cite{kwiatkowska2011,prismwebsite}, a popular probabilistic model checker, to perform the verification. Detailed study of the list of verification queries in PRISM goes beyond the scope of this paper. Here our objective is to consider the specification itself and show that verification can in principle be carried out. 

We will assume here that the LISA agent to be verified will function in a physical environment which it needs to perceive and model using sensing instruments such as IMUs, Lidars, Sonars, cameras and so on. 
While it plans and takes actions, the agent will not always
completely succeed with its intentions. It is assumed however that any
action process is able to assess its own outcome, i.e. success, partial success or failure, 
and that it is able to feed this information back to the belief base of the agent. 
The action feedbacks can potentially include direct information on the cause of problems in case of failure or partial failure, however the level of detail will depend on the amount of the agent programmers' efforts to
identify what is necessary to achieve a useful intelligent machine
behaviour. This way the decision making of the agent can be gradually
upgraded by more and more detailed analysis of its own work.  For our
verification theory the level of depth in the action-feedback is not relevant
but its existence is.

The agent and the environment is considered as a single system to be verified. In fact there will be three subsystems to be modelled for verification purposes:  (1) the robotic agent (2) the physical environment (3) the human operator (see Fig. \ref{fig:fullmodel}).
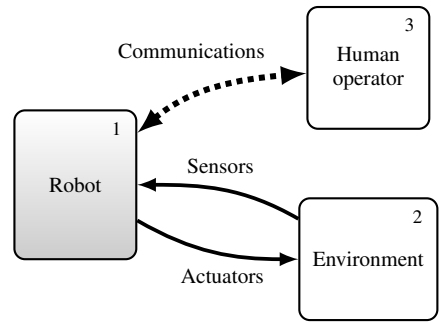
\begin{figure}[htbp]
	\centering

\begin{tikzpicture}[auto,node distance=10mm,>=latex,font=\small,scale=0.9, every node/.style={transform shape}]

	\tikzstyle{rc}=[thick,draw=black,inner sep=3mm,text width=12mm,text centered,rounded corners,minimum height=18mm]
	\tikzstyle{num}=[font=\footnotesize]
	
	\node[rc,left color=white,right color=white!80!black,shading angle=0,minimum height=22mm] (robot) {Robot};
	\node[rc,above right=-3mm and 25mm of robot] (human) {Human operator};
	\node[rc,text width=,inner sep=2mm,below= of human] (environment) {Environment};
	
	\node[num,above right=-5mm and -5mm of robot] (n1) {1};
	\node[num,above right=-5mm and -5mm of environment] (n2) {2};
	\node[num,above right=-5mm and -5mm of human] (n3) {3};
	
	\draw[<->,line width=0.8mm,dotted] (robot) [out=40,in=185] to node[text width=10mm,above left=4mm]{Communications} (human);
	\draw[->,line width=0.5mm] (environment) [out=150,in=0] to node[text width=10mm,above=1mm]{Sensors} (robot);
	\draw[->,line width=0.5mm] (robot) [out=-30,in=180] to node[text width=10mm,below=1mm]{Actuators} (environment);

\end{tikzpicture}
	\caption{Subsystems of the full model. The human operator can send mission objectives to the Robot, and receive reasoning feedback. The robot interacts with the environment with an array of sensors and actuators.}
	\label{fig:fullmodel}
\end{figure}
In a broader approach to verification, when we verify the agent for a broader class of environmental possibilities, the action feedbacks are events with random characteristics which do not reflect the internal structure of the environment. 

\begin{theorem}
\label{th1}
Assuming an independent probability distribution of all action
feedbacks and sensory events from the environment, the complete
decision making of a LISA can be modelled by a \gls{dtmc}.
\end{theorem}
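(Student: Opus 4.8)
The plan is to verify the four defining properties of a \gls{dtmc} in turn: a discrete state space, discrete time, the Markov (memorylessness) property, and well-defined transition probabilities. The guiding idea is to split one LISA reasoning cycle into a purely \emph{deterministic} part (application of the logic rules $L$ to closure, computation of the event set, plan selection, and action dispatch), which is a fixed function of the agent's current state, and a purely \emph{stochastic} part (the incoming percepts and action feedbacks delivered by the environment). The independence hypothesis is then used to argue that the stochastic part contributes only memoryless randomness, so that the composition of the two parts is a Markov transition.

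First I would fix the state. The natural candidate for the state at reasoning cycle $t$ is the current belief base $B_t \subset B$, augmented with the set of currently active (and suspended) plans together with, for each, the index of the next action to be executed. Because $B$ is finite, $\wp(B)$ has $2^{|B|}$ elements; because $\Pi$ is finite and each plan $\pi_j$ contains a finite action list $a_1,\ldots,a_{n_j}$, each plan carries only finitely many program-counter values; and, assuming a bound on the number of concurrently instantiated plans, the resulting state space $S$ is finite, hence discrete. Time is the reasoning-cycle index $t \in \{0,1,2,\ldots\}$, which is discrete by construction.

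Next I would make the one-cycle transition explicit and read off memorylessness. Given the state at cycle $t$, the BUF folds the (random) inputs into $B_t$; the rules $L$ are then applied until a stable conclusion, yielding $B_{t+1}$; the BRF forms $E_{t+1}=B_{t+1}\setminus B_t$; every plan whose trigger lies in $E_{t+1}$ and whose context is satisfied by $B_{t+1}$ is added to the active set through $S_O$; and the next actions are dispatched. Every one of these operations is a deterministic function of the current state and the current inputs. The only randomness is the vector of percepts and action feedbacks, and by the assumed independent distribution of all action feedbacks and sensory events its conditional law given the present state does not depend on $B_0,\ldots,B_{t-1}$. Hence $P(s_{t+1}\mid s_t,s_{t-1},\ldots,s_0)=P(s_{t+1}\mid s_t)$, and summing the input probabilities over all input vectors that drive $s_t$ to a given $s_{t+1}$ yields a well-defined (time-homogeneous, if the environment distributions are stationary) transition probability. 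Collecting these into a transition matrix over $S$ exhibits the \gls{dtmc}.

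The main obstacle I anticipate is the apparent dependence of the event set on \emph{two} consecutive belief bases, $E_t=B_t\setminus B_{t-1}$, which at first sight makes the raw sequence $\{B_t\}$ non-Markov. The remedy is exactly the state augmentation above: carrying $B_t$ inside the state means that when the transition computes $E_{t+1}=B_{t+1}\setminus B_t$ both operands are available within the single step, so no extra history is required. A related delicate point is the ``suspend a plan while awaiting a feedback'' mechanism, which threatens to hide unbounded memory in the agent; this is neutralised by storing each plan's next-action index in the state and bounding concurrent instances, keeping $S$ finite. Ultimately the crux is that the independence assumption does the real work: it is precisely what guarantees that the environment's randomness is conditionally independent of the past given the present. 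Were feedbacks correlated with earlier states beyond what the present state encodes, the same construction would fail and one would be forced into a larger, history-dependent state (or into a hidden-Markov rather than a Markov model).
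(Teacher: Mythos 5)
Your proof is correct and follows essentially the same route as the paper's outline: the state space is $\wp(B)$ (the paper absorbs your explicit plan program counters by asserting that the execution status of actions within plans is itself represented by predicates in $B$), time is discretised by reasoning cycles, and each transition is decomposed into a deterministic phase (rule application to stable conclusions, plan triggering and closure, predicate elimination) followed by a probabilistic phase (appearance of percept and action-feedback predicates), with the independence assumption making the transition probabilities well defined. Your explicit state augmentation to handle $E_{t+1}=B_{t+1}\setminus B_{t}$ and suspended plans, and your explicit verification of the Markov property, merely make rigorous what the paper's proof sketch leaves implicit.
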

\begin{proof} (Outline)
   As during the execution of any action(s) in some plan(s) is(are) represented by specific predicates in $B$, the power set $\wp(B)$ (the set of subsets of $B$) forms a state space for the agent as any $s\in \wp(B)$ provides complete information for the decision making of the agent. This holds true regardless of the fact that the memory of the agent (in terms of its temporal world model) influences its decisions, as those decisions are modelled by probabilistic predicate feedback from processes of the agent analysing its external world. The state of the \gls{dtmc} is initialised by its initial set of beliefs, goals and probabilities of feedback from its initial actions. Transition to a new state of the agent only happens at the end of each reasoning cycle. The new set of predicates for the new state is developed in two steps: (1) by deterministic application of reasoning rules, triggering or closure of new plans and elimination of predicates due to the semantics of LISA's execution (2) probabilistic appearance of perception predicates and action feedback predicates in $B$.  The probability distribution of new sensory and action feedback predicates is well defined in any LISA program and hence modelling as a \gls{dtmc} can be completed. 
\end{proof} 

The fact that we do not need a MDP (Markov Decision Process) to model the LISA agent and that a \gls{dtmc} suffices, signifies the importance of defining limited instruction set agent for verification purposes to reduce complexity. Nevertheless, a lot can be done to further reduce complexity of design-time verification  through  programming style in LISA. 

Another approach to environmental modelling is to abstract away the simulation of the environment so that perception events and actions feedbacks are modelled by probabilistic dependency in terms of \gls{dtmc}s. Similarly, the behaviour of human operators can also be formulated in terms of a \gls{dtmc}. The next formal result covers this case. 

\begin{theorem}
\label{th2}
Assuming that both human communication and the physical environment are possible to model by \gls{dtmc}s, which are possibly inter-dependent by conditional probabilities, the complete decision making of a LISA in its environment can be modelled by a \gls{dtmc}. 
\end{theorem}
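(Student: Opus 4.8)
The plan is to realize the combined system as a synchronous product of the component chains and to show that the Markov property is preserved under the coupling. By hypothesis the environment is a \gls{dtmc} over a state space $S_E$ and the human operator is a \gls{dtmc} over $S_H$, the two being possibly coupled; I would first fold them into a single \emph{world} chain over $S_W=S_E\times S_H$ whose one-step kernel $P_W$ encodes both their individual dynamics and the admitted conditional interdependence. From Theorem \ref{th1} the agent, \emph{conditioned} on any fixed stream of percepts, action feedbacks and communications, evolves over the belief state space $\wp(B)$ by a well-defined transition kernel $P_{\mathcal{R}}$; the essential observation is that the independence assumption of Theorem \ref{th1} was needed only to make the agent Markov \emph{in isolation}, whereas here the correlation among its inputs is carried entirely by the state of $S_W$, so the \emph{conditional} kernel $P_{\mathcal{R}}$ remains exactly the one supplied by that theorem. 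The candidate state space of the combined chain is then the product $S=\wp(B)\times S_W$.

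The core step is to exhibit the one-step joint law as a function of the current joint state $s=(b,e,h)$ alone. By the LISA semantics the actions emitted in a cycle are determined by the current belief state $b$ (plan selection and execution read only $B_t$, and all true-context plans are run), so the actuation signal driving the world is a deterministic function $\alpha(b)$; writing $\sigma(e',h')$ for the input predicates generated by the freshly drawn world state, the causal order of a single reasoning cycle composes into
\begin{equation}
P\big((b,e,h)\to(b',e',h')\big)=P_W\big((e',h')\mid e,h,\alpha(b)\big)\,P_{\mathcal{R}}\big(b'\mid b,\sigma(e',h')\big).
\end{equation}
Each factor depends only on the current joint state and the relevant next-state coordinates, so summing over $(b',e',h')$ yields $1$ and the next state depends on the past only through $s$; hence $(S,P)$ is a \gls{dtmc}.

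I expect the main obstacle to be the \emph{simultaneity} of the coupling: the agent drives the world while the world simultaneously feeds the agent, so a careless joint law would contain a within-cycle circular dependence and fail to define a probability distribution. The resolution, which must be argued explicitly, is that under the LISA reasoning cycle this dependency graph is acyclic --- each action is fixed at the \emph{start} of a cycle from the current beliefs, whereas the matching feedbacks and percepts are consumed only at the \emph{start of the following} cycle. Exhibiting this topological order of the conditional factors (world first, given the current action, then agent, given the new percepts) is precisely what turns the displayed composition into a genuine joint distribution rather than an ill-posed fixpoint. The same argument also guarantees that the admitted interdependence between $S_E$ and $S_H$ introduces no further cycle, so $P_W$ is itself a well-defined \gls{dtmc} kernel and may be substituted into the construction above.
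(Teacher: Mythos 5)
Your proposal is correct and takes essentially the same route as the paper: the paper's (very compressed) proof outline likewise adapts Theorem \ref{th1} by letting the percept and action-feedback probabilities be supplied by the human and environment DTMCs and then asserts that all state transition probabilities follow by basic rules of conditional probabilities, which is precisely the factorized product kernel you write down explicitly. Your explicit enlargement of the state space to $\wp(B)\times S_E\times S_H$ and your acyclicity argument for the within-cycle dependency (action fixed from current beliefs, percepts consumed at the start of the next cycle) are elaborations the paper leaves implicit, but they fill in rather than replace its argument.
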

\begin{proof} (Outline)
 The alteration relative to the proof of Theorem 1 is that the probabilistic appearance of perception  and action feedback predicates is this time determined by the \gls{dtmc}s of human communication and interaction with the agent and predicates of action feedback from the physical environment.  Due to the \gls{dtmc}s for both human responses as well as environmental responses well defined,  all the state transition probabilities can be calculated by basic rules of conditional probabilities and the proof is hence completed.   
\end{proof} 

It directly follows from Theorem \ref{th1} and \ref{th2} that the decision making system of a LISA agent can be verified in PRISM at design time.  The following sections will examine whether executable plan set selection can be done in terms of \emph{runtime model checking}, which can also be termed as \emph{runtime verification of executable plan selection} by the agent.

\section{Principles of applicable plan selection}
\label{sec:principles}

In this section we discuss the general definition of a more advanced method for the Applicable Plan Selection Function $S_O$.

In order to improve the agent's  understanding of the world in association with its own internal state, a finite set of so called ``operational states'' is defined: 
\begin{equation}
	X = \left\{ x_1,x_2,\ldots,x_{n_x} \right\} \subset \mathcal{F}
\label{eq:opstates}
\end{equation}
Operational states are beliefs, predicates of $\mathcal{F}$, that represent high level states of the agent, for example ``Waiting for instructions'' or ``Exploring area of interest''. Operational states are not mutually exclusive, more than one can be active at any given time. We will call $X_t \subset X$ the set of active operational states at time $t$.\\
The agent has to take care of updating its own operational state by applying a set of logic implication rules. These rules are based on statements called ``pre-conditions''.  
Pre-conditions include beliefs generated by the percept process as well as mental notes created during plans executions. \\
The associations between plans, that are effectively sequences of actions that the agent can take and their possible outcomes, 
can mostly be made in advance, especially using advanced simulation software. However for advanced systems such as a fully operational autonomous vehicles, the amount of {\it possible outcomes} can easily become too  high to be accounted for at runtime. In order to limit the number of associations, we apply the following classes of constraints:
\begin{itemize}
\item \emph{Temporal order of actions}.\\
	The agent can be programmed so it has the ability to memorise and later check for past actions. Some actions must be preceded by others, for instance the agent is not allowed to pick up an instrument that it never deployed in the environment.
\item \emph{Actions performed in the context of operational modes}.\\
	Operational modes allow to give a general context to the status of the mission and can be used to reduce the number of associations between course of actions and possible outcomes .
\item \emph{Environmental dependency}.\\
	An autonomous vehicle is designed to operate autonomously in any condition. External events shape the way the system is going to act throughout the mission. For instance if the agent experiences a communication loss with other agents, whom it was supposed to communicate with, the mission objectives may not be achievable any more and the agent may want to do something else instead.
\end{itemize}

Let us define an ``implication function'' $f_\mathcal{I}$ that associates every plan a finite set of possible outcomes, and the likelihood that each of those event would happen:

\begin{equation}
	\mathcal{I} : \Pi \rightarrow \wp \left( \wp(B) \times [0,1] \right)
\end{equation}
%

Once all the associations have been made, the agent can generate a graph that represents all the possible course of actions that the agent can go through according to its symbolic plan list for the future, using outcomes (events) associated with actions in the code of the plans.  Fig. \ref{fig:tree1} shows a simple two time steps representation of this concept.

\begin{figure}[htbp]
	\centering
	
\begin{tikzpicture}[auto, node distance=8mm,>=latex,scale=0.9, every node/.style={transform shape}]

	\tikzstyle{roundcorners}=[minimum height=8mm,minimum width=12mm,thick,draw=black,rounded corners=3]
	\tikzstyle{round}=[thick,draw=black,circle]

	\node (et1) [round] {$E_1$};
	\node (pi1) [roundcorners,below left=6mm and 0 of et1] {$\pi$};
	\node (pi12) [roundcorners,below right=6mm and 0 of et1] {$\pi$};
	\node (et21) [round,below right=8mm and 0mm of pi1] {$E_2$};
	\node (et22) [round,left= of et21] {$E_2$};
	\node (et23) [round,right= of et21] {$E_2$};
	\node (ref1) [minimum height=8mm,left=4mm of et22] {};
	\node (pi2) [roundcorners,below= of ref1] {$\pi$};
	\node (pi3) [roundcorners,right=5mm of pi2] {$\pi$};
	\node (pi4) [roundcorners,right=5mm of pi3] {$\pi$};
	\node (pi5) [roundcorners,right=5mm of pi4] {$\pi$};
	\node (et31) [round,below= of pi2] {$E_3$};
	\node (et32) [round,below= of pi3] {$E_3$};
	\node (et33) [round,below= of pi4] {$E_3$};
	\node (et34) [round,below= of pi5] {$E_3$};
	
	\node (legend) at (30mm,-3mm) [
		minimum height=2mm, 
		minimum width=20mm,
		thin,
		draw=black,
		left color=white,
		right color=black
	]{};
	\node (like) [above=.1mm of legend] {\small Likelihood};
	\node (n05) [below=.1mm of legend] {\footnotesize $0.5$};
	\node (n0) [left=4mm of n05] {\footnotesize $0$};
	\node (n1) [right=4mm of n05] {\footnotesize $1$};
	
	\draw[dotted] (et1) -- (pi1);
	\draw[dotted] (et1) -- (pi12);
	
	\draw[white!50!black,thick,->] (pi1) -- (et21);
	\draw[white!40!black,thick,->] (pi1) -- (et22);
	\draw[white!10!black,thick,->] (pi1) -- (et23);
	\draw[white!80!black,thick,->] (pi12) -- (et21);
	\draw[white!20!black,thick,->] (pi12) -- (et23);
	
	\draw[dotted] (et21) -- (pi3);
	\draw[dotted] (et21) -- (pi4);
	\draw[dotted] (et21) -- (pi5);
	\draw[dotted] (et22) -- (pi2);
	\draw[dotted] (et22) -- (pi3);
	\draw[dotted] (et23) -- (pi4);
	\draw[dotted] (et23) -- (pi5);
	\draw[white!20!black,thick,->] (pi2) -- (et31);
	\draw[white!80!black,thick,->] (pi2) -- (et32);
	\draw[white!60!black,thick,->] (pi3) -- (et31);
	\draw[white!40!black,thick,->] (pi3) -- (et32);
	\draw[white!10!black,thick,->] (pi4) -- (et32);
	\draw[white!90!black,thick,->] (pi4) -- (et33);
	\draw[black,thick,->] (pi5) -- (et34);
	
	\draw[dotted] (et31) -- +(-7mm,-10mm);
	\draw[dotted] (et31) -- +(3mm,-10mm);
	\draw[dotted] (et32) -- +(-10mm,-10mm);
	\draw[dotted] (et32) -- +(5mm,-10mm);
	\draw[dotted] (et33) -- +(-7mm,-10mm);
	\draw[dotted] (et33) -- +(3mm,-10mm);
	\draw[dotted] (et34) -- +(-10mm,-10mm);
	\draw[dotted] (et34) -- +(5mm,-10mm);
	
	\node[below left=-5mm and 10mm of pi1,text width=20mm] (ar1) {\footnotesize\emph{random outcome}};
	\draw[thin] (ar1) -- +(21mm,-6mm);
	\node[below left=-7mm and 6mm of et22,text width=20mm,text centered] (ar2) {\footnotesize\emph{triggered\\[-1mm] plan selection}};
	\draw[thin] (ar2.-60) -- +(11.5mm,-3mm);
	
\end{tikzpicture}
	
	\caption{Example of possible course of plans tree: every plan can generate one or more events set with a certain probability, illustrated by the colour of the arrow. Every set of events in turn can trigger more than one plan.}
	\label{fig:tree1}
\end{figure}
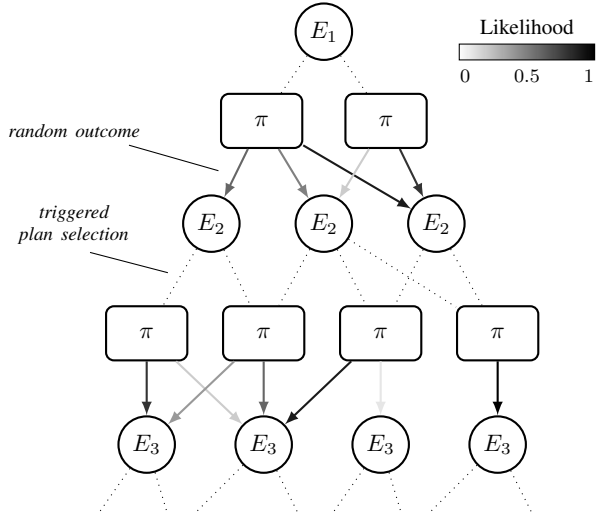

Let us define a ``course of plans'' as a sequence of plans $\hat{\Pi} \subset \Pi$, a branch of the tree in Fig. \ref{fig:tree1}, associated with a particular goal represented by a set of events $E_t$ that the agent is committed to pursue, and the likelihood that the course of action leads to the successful achievement of said goal:

\begin{equation}
	c_\pi^{(j)} = \left\{ \hat{\Pi}^{(j)}, E_t^{(j)}, \lambda^{(j)} \right\} 
\end{equation}
where the index $j$ numbers each branch of the three in Fig. \ref{fig:tree1}.
At any moment in time the agent will have some knowledge of the environment in the form of belief predicates, it will know in what operational state it is operating at the moment and it will have knowledge of what has been done in the past, still in the form of predicates. 
At any  particular time a plan can have multiple course of actions associated with it. Using both pre-computed and runtime simulation results, it would be possible to predict the one with the highest likelihood of success to determine whether or not the plan is likely to lead to the achievement of the current goals. To do this ``reward value'' can be applied to every applicable plan in $\Pi_t$. Let us define a ``reward function'' $f_\mathcal{R}$ that uses these principles to assigns a \emph{reward} value to every plan:

\begin{equation}
	\mathcal{R}_t : \Pi_t \rightarrow \mathbb{R^+}
\end{equation}

This association is clearly time dependant in the sense that the same plan might have a variable likelihood of success value when applied at different points of the mission under different environmental conditions. For this reason, at run time a ``reward update function''(which is defined at design time of the agent) needs to be evaluated each  time the agent is required to make a choice between different plans, and so different course of actions:

\begin{equation}
	\mathcal{RU} : \Pi_{t-1} \times \mathbb{R^+} \rightarrow \Pi_t \times \mathbb{R^+}
\end{equation}

The above runtime schema and method enables the agent to consider it options for the future. In the next section we are going to propose a way to assess its options with a probabilistic model checking technique and finally make the agent to decide about its next action (and likely consecutive actions).

\section{Plan Selection by Runtime Model Checking}


The applicable plan selection function $S_O$ makes a selection amongst plans that are triggered by the same event and that have a context that matches the current belief. The principles proposed in Section \ref{sec:principles} can be modelled with \gls{dtmc} as shown in Section \ref{sec:dtmcabstraction}.
However in a more complex plan selection mechanism, which may break the applicability of a \gls{dtmc} model, the $S_O$ block's plan selection can be replaced by a detailed plan-optimiser or by a combination of a simplified symbolic planner with options at its output and followed by detailed model checker for a time horizon of consequences of actions, which selects the best option. This is not unlike humans select their plans: first they consider a few alternatives and then check each for details. 
\begin{figure}[htbp]
\centering
\begin{tikzpicture}[auto,node distance=6mm,>=latex,font=\small]

	\tikzstyle{rect}=[thick,draw=black,inner sep=6,text centered,text width=140,minimum height=30,rounded corners=3]
	
	\node[rect] (n1) {Ranking of goals and selection the next most important goal G};
	\node[rect,below= of n1] (n2) {\emph{Using a symbolic planner}: a list of feasible symbolic plans is identified};
	\node[rect,below= of n2] (n3) {\emph{Using PRISM}: Probabilistic model checker is repeatedly applied to decide on the most likely best first action, which can be continued with plans likely to succeed to achieve G};
	
	\draw[->,thick] (n1) -- (n2);
	\draw[->,thick] (n2) -- (n3);

\end{tikzpicture}
\caption{The principle of selecting the best plan by methods of model checking. }
\label{runMC}
\end{figure}
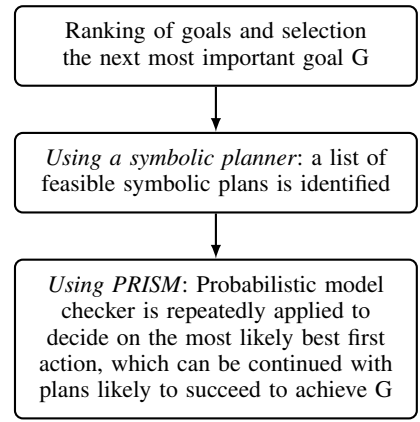
It is well known that model checking can often be used for plan
selection by generating counterexamples for the negation of a temporal
logic formula to be verified. On the other hand, planning rarely
answers queries about liveness and negative consequences are often hard to formulate as constraints, especially when indirectly implied, . For these reasons probabilistic model checking, in combination with planning in continuous and discrete space time, is a reasonable choice to check out feasible and most promising next actions and also consequent options for actions depending on how the future outcomes evolve. 

Fig. \ref{runMC} outlines the operation of the runtime model checker
for the selection of the next best action. PRISM can generate
counterexamples for queries \cite{HanKD09}, each of which is a trace in the
model. The counterexample generation produces the trace  with the
highest probability first. When we feed PRISM the negation of a query,
the first action in the trace, as a part of trace provided counter-example, is the next best action. 


In this scheme the agent decides on the next suitable goal using its reasoning rules in the top block of Fig. \ref{runMC}. Next a symbolic planner is applied to generate alternatives of plan and action sequences for the future. This list of symbolic plan/action sequences is then analysed by a probabilistic model checker to find the symbolic plan most likely to be successful. 

\section{Runtime model checking example in LISA}

Consider an \gls{asv} on an exploration mission. The vehicle has to
explore two areas, that we will call ``Area A'' and ``Area B''.

The system is trying to carry out two sets of actions:

\begin{figure}[h]
\centering
\begin{tikzpicture}[auto,node distance=2mm,>=latex]
 
	\tikzstyle{rect}=[thin,draw=black,dashed,inner sep=2mm,text width=30mm]

	\node[rect] (n1) {\emph{Explore Area A}
	\begin{itemize}
    	\item Go to Area A
    	\item Cover full area
   	\end{itemize}};
	\node[rect,right= of n1] (n2) {\emph{Explore Area B}
	\begin{itemize}
    	\item Go to Area B
    	\item Cover full area
   	\end{itemize}};

\end{tikzpicture}
\end{figure}
Each area is partitioned into a number of blocks: $A_1, A_2, \ldots, A_{N_A}$ for Area A and $B_1,B_2,\ldots,B_{N_B}$ for Area B. The exploration of the areas can be done in the same number of steps
accordingly, one block for each step. The vehicle consumes one
unit of fuel to explore one block. During the exploration, the weather
can change: with probability $p$, it becomes bad and with $1-p$ it
becomes good. When the former happens, the vehicle consumes twice as much fuel to explore each block. When the weather in the other area is
good, it can move to that area with  probability $1-q$ of a successful passage. When one area is
fully explored, the vehicle goes to the other area, assuming there is
still enough fuel. Moving between the two areas consumes one unit of
fuel, as well as going to each area from the base. At any point during
the mission, if the fuel tank becomes almost empty, e.g. when the fuel in
the tank is one unit, the vehicle has to go back to the
base. Once both areas are explored, the agent will head back to the
base. 

\begin{figure}[htbp]
\centering
\begin{tikzpicture}[auto,node distance=8mm,>=latex,font=\footnotesize,scale=0.75, every node/.style={transform shape}]

	\tikzstyle{rect}=[draw=black,inner sep=1mm,text centered,text width=25mm,minimum height=11mm]
	\tikzstyle{ell}=[draw=black,inner sep=1mm,text centered,text width=20mm,rounded corners=3.5mm,minimum height=11mm]

	\node[ell] (x0) {Initial choice};
	
	\node[rect,below left=8mm and -2mm of x0] (a1) {Go to Block $A_1$\\$Fuel:=Fuel-1$};
	\node[rect,below right=8mm and -2mm of x0] (a2) {Go to Block $B_1$\\$Fuel:=Fuel-1$};
	\draw[->] (x0) -- (a1);
	\draw[->] (x0) -- (a2);
	
	\node[ell,below left=8mm and -2mm of a1] (x1) {In Block $A_1$\\$Fuel==0$};
	\node[ell,below right=8mm and -1mm of a1] (x2) {In Block $A_1$\\$Fuel>0$};
	\draw[->] (a1) -- (x1);
	\draw[->] (a1) -- (x2);
	
	\node[rect,below= of x1,minimum height=8mm,text width=20mm] (a3) {Back to base};
	\node[ell,below left=8mm and -2mm of x2] (x3) {Good weather\\in $A_1$};
	\node[ell,below right=8mm and 0mm of x2] (x4) {Bad weather\\in $A_1$};
	\draw[->] (x1) -- (a3);
	\draw[->] (x2) to node[text width=4mm,above left=-1mm and 0mm]{1-p} (x3);
	\draw[->] (x2) to node[text width=2mm,above right=-1mm and 0mm]{p} (x4);
	
	\node[ell,below= of a3] (x5) {In Base};
	\node[rect,below=of x3,text width=26mm] (a4) {Explore Block $A_1$\\Go to Block $A_2$\\$Fuel:=Fuel-1$};
	\node[ell,below left=8mm and -8mm of x4] (x6) {Bad weather in Block $B_1$};
	\node[ell,below right=8mm and -8mm of x4] (x7) {Good weather\\in block $B_1$};
	\draw[->] (a3) -- (x5);
	\draw[->] (x3) -- (a4);
	\draw[->] (x4) to node[text width=2mm,above left=-1mm and 0mm]{p} (x6);
	\draw[->] (x4) to node[text width=4mm,above right=-1mm and 0mm]{1-p} (x7);
	
	\node[rect,below=of x6,text width=26mm] (a5) {Explore Block $A_1$\\Go to Block $A_2$\\$Fuel:=Fuel-2$};
	\node[rect,below right=8mm and -15mm of x7] (a7) {Go to Block $B_1$\\$Fuel:=Fuel-1$};
	\draw[->] (x6) -- (a5);
	\draw[->, dotted] (x7) to node[text width=4mm,above left=-1mm and 0mm]{1-q} (a5);
	\draw[->, dotted] (x7) to node[text width=2mm,above right=-1mm and 0mm]{q} (a7);
	
	\node[ell,below= of a5] (x8) {In Block $A_2$};
	\node[ell,below= of a7] (x9) {In Block $B_1$};
	\draw[->] (a4) [out=-90,in=180] to (x8);
	\draw[->] (a5) -- (x8);
	\draw[->] (a7) -- (x9);

\end{tikzpicture}
\caption{The partial Abstract \gls{dtmc} for the example}\label{fig:example}
\label{fig:exmp1}
\end{figure}
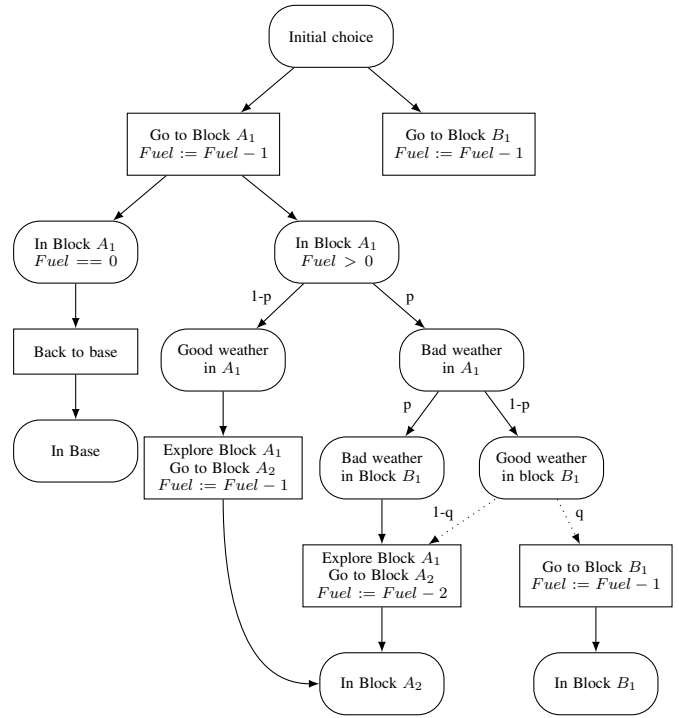

The symbolic plans are simply extracted from pre-defined plans in LISA by an action of the LISA agent, which generates symbolic plans to satisfy pre and post conditions, including  operational state history. The probabilities for the \gls{dtmc} in PRISM are obtained from the action feedback 
probabilities in the LISA program. Hence the programming framework also lends itself
to automated extraction of PRISM models of runtime verification for
decision making.
Fig. \ref{fig:exmp1} illustrates a partial graph of the \gls{dtmc} that models the
example for the design time verification. The dotted lines represent
the plans that need to be chosen. The reward for each plan can be the
probability of successfully fulfilling the mission from the
plan, which can be specified as a reachability query. This probability
is then computed by PRISM for each state in the DTMC. The reward for
the plan ``Explore Block $A_1$ and go to Block $A_2$'' is the
probability at the state ``In Block $A_2$'', while the reward for the
other plan is the probability at the state ``In Block $B_1$''.
The Appendix contains the sample PRISM program for Fig. \ref{fig:exmp1}.


\section{Implementation of LISA}

A suitable framework to implement LISA is MOOS-IvP \cite{newman2003,moosivpwebsite,benjamin2012}. MOOS is a inter-process communications-middleware software that is structured in a star-like fashion. It features a central node called the MOOS Database (MOOSDB) and a set of applications that communicate with each other through the MOOSDB in a publish/subscribe manner. IvP is a MOOS application that optimises behaviour selection in actions of the agent. Behaviours for action execution are modules of the IvP that \emph{compete} over the definition of control values that will be used to act on the environment. For example for control variables such as ``direction'' and ``speed'', behaviours can be ``waypoint following'', ``collision avoidance'' and so on.
This method is ideal to reduce complexity of the reasoning as it allows to further abstract action definitions and it significantly reduces the need for nested conditional statements within symbolic plans.

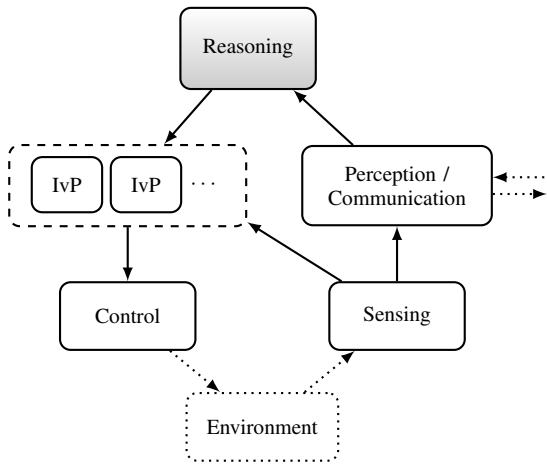
\begin{figure}[htbp]
\centering
\begin{tikzpicture}[auto,node distance=8mm,>=latex,font=\small,scale=0.9, every node/.style={transform shape}]

	\tikzstyle{rc}=[thick,draw=black,rounded corners,inner sep=3mm,minimum height=10mm,minimum width=20mm]
	
	\node (reasoning) [rc,left color=white,right color=white!80!black,shading angle=0,minimum height=12mm]  {Reasoning};
	
	\node (ivpbox) [rc,below left=8mm and -10mm of reasoning,dashed,minimum width=35mm,minimum height=12mm]  {};
	\node (ivp1) [rc,below left=-10.5mm and -14mm of ivpbox,minimum height=8mm,minimum width=10mm]  {IvP};
	\node (ivp2) [rc,right=1mm of ivp1,minimum height=8mm,minimum width=10mm]  {IvP};
	\node (dots) [right=0mm of ivp2] {$\cdots$};
	
	\node (perc) [rc,right= of ivpbox,text width=22mm,text centered] {Perception / Communication};
	\node (con) [rc,below= of ivpbox] {Control};
	\node (sens) at (con -| perc)[rc] {Sensing};
	\node (ref1) at ($(con)!0.5!(sens)$) []{};
	\node (env) [rc,dotted,below=10mm of ref1] {Environment};
	
	\draw[->,thick] (reasoning) -- (ivpbox);
	\draw[->,thick] (perc) -- (reasoning);
	\draw[<-,thick,dotted] (perc.5) -- +(8mm,0);
	\draw[->,thick,dotted] (perc.-5) -- +(8mm,0);
	\draw[->,thick] (ivpbox) -- (con);
	\draw[->,thick] (sens) -- (ivpbox.-17);
	\draw[->,thick] (sens) -- (perc);
	\draw[->,thick,dotted] (con) -- (env);
	\draw[->,thick,dotted] (env) -- (sens);

\end{tikzpicture}
	\caption{Implementation of the LISA architecture using IvP planning engines. Every node contains multiple sub-nodes of lower level skills of the LISA agent.}
	\label{fig:imp}
\end{figure}

 Fig. \ref{fig:imp} is illustrates the overall structure of the system. Each block is a collection of ``skills'' of the agent that can be implemented as MOOS processes (nodes). In MOOS the communication between nodes happens through the central database. The \emph{Reasoning} block performs symbolic planning and can activate IvP modules to optimise action execution within  plans of the LISA architecture. The \emph{Perception/Communication} block converts filtered sensing data and the communication channels into enriched symbolic information for the agent Reasoning. The IvP modules operate an intermediate feedback loop with some processes within the Sensing and Control blocks through their variables. Processes in the  \emph{Control} block interact with physical actuators that operate in the environment. Processes in the\emph{Sensing} block interact with physical sensors and take care of filtering data in order to minimise noise.

Behaviours are internal modules of the IvP application that reproduce a particular action over a set of control variables $c_1,c_2,\ldots,c_n$, and generate at every cycle a piecewise linear function called ``IvP function'' $f(c_1,c_2,\ldots,c_n)$ that maps points of the decision space to values that reflect the degree to which that control array supports the action. Once these functions are produced a multi-objective optimization problem is solved by another internal module called the IvP-solver:
\begin{equation}
\begin{array}{rl}
	\arg \underset{c_1,\ldots,c_n}{\max} & w_1 f_1(c_1,\ldots,c_n) + \cdots + w_k f_k(c_1,\ldots,c_n)\\
	\text{s.t.} & f_i \; \text{is an IvP piecewise defined function}\\
		& w_i \in \mathbb{R}_{\geq 0}
\end{array}
\end{equation}
where $w_1,w_2,\ldots,w_n$ are called \emph{priority weightings}. In MOOS-IvP the priority weightings are influenced by two main factors:
\begin{enumerate}
\item
	With every behaviour is associated a set of binary flags that allow control over the activation time of the behaviour itself. These flags can be conditionally modified by the behaviour itself, which means that the behaviour has partial control over its own state, and can also be associated with external variables that can be modified by other nodes.
\item
	A hierarchical mode system is defined within IvP that allows to organise the behaviour activation according to declared mission modes. Modes and sub-modes can be declared in line with the designer's own concept of mission evolution, and conditional statements can be implemented so to switch between modes. Modes can also be associated with external variables that can be modified by other nodes.
\end{enumerate}

The control over the value of the priority weightings allow the Reasoning of LISA an appropriate level of control over the functionality of the behaviours:
\begin{itemize}
\item
	Actions $a_1,a_2,\ldots,a_{n_a}$ from the agent set $A \subset \mathcal{F}\setminus B$ can directly activate or deactivate behaviours. At the end of every reasoning cycle (See Fig. \ref{fig:lisacycle}), LISA takes a set of next actions from the available plans in the Intentions set and issues them for execution. These actions can be external or internal routines. A way to integrate LISA with MOOS-IvP is to create external actions that can modify variables that are linked to the behaviour's activation flags. 
	For example an action of the type ``Go to area A'' can be executed in many different ways depending on the state of the environment, on the distance and so on. For this purpose the reasoning can send an activation flag to several behaviours that all operate on the same space (for example ``direction'' and ``speed'') and that compete with each other.
\item
	At any given moment in time, a set $X_t$ of operational states from $X=\{x_1,x_2,\ldots,x_{n_x}\}$, defined in Eq. (\ref{eq:opstates}), are active in the LISA agent. These operational states are updated and activated according to the mission status. When operational state are set or modified in LISA external actions can be executed to modify variables associated with the hierarchical mode structure of IvP. For example an operational state of the type ``Exploring area A'' can be associated with a set of behaviours that include for example \emph{area covering}, but at the same time \emph{collision avoidance} and \emph{SLAM}.
\end{itemize}


\section{Conclusions}

The history of   methods for controlling of autonomous
vehicles has been briefly reviewed and  a new agent architecture called Limited
Instruction Set Agent (LISA)  has been proposed. The architecture originates from previous AgentSpeak implementations such as Jason and Jade. By reviewing in detail  the reasoning cycle and the implementation of Jason, we identified several design features which make verification complex  and proposed an alternative
architecture that improves verifiability of agent reasoning in physical environments. In particular, we reduced
complexity by simplifying the reasoning cycle with the use of
multi-threaded processing and proposed the use of model checking
techniques on two levels: (1) \emph{Design-time model checking}, we
prove that it is possible to abstract the agent to a \gls{dtmc} and in
turn verify the decision making process with existing model checking
techniques, (2) \emph{Runtime verification of executable plan
  selection}, the agent is able to assess a tree of possible future
outcomes and select a plan which is most likely to succeed in reaching
the mission goals.

\section*{Appendix}

Example PRISM program of runtime verification in the LISA system. 

\begin{lstlisting}[language=Prism,numbers=left]
dtmc

const int No = 15; 
const int Na = 5;
const int Nb = 5;

const double Pa = 0.1; // probability of bad weather in Area A
const double Pb = Pa;  // probability of bad weather in Area B
const double Pi = 0.5; // probability of initial choice between Area A and Area B
const double Ps = 0.6; // probability of switch between Area A and Area B in case of bad weather

module robot1
	a1 : [0..Na] init 0; // Area A
  b1 : [0..Nb] init 0; // Area B
  oil : [0..No] init No; // oil level
  s  : [0..3] init 0; // 0: base, 1: Area A, 2: Area B, 3: mission aborted

  // initial choice
	[] (s = 0) & (a1 = 0)& (oil > 0) 
	-> Pi: (s'=1)&(oil'=oil-1) + (1-Pi): (s'=2)&(oil'=oil-1);

  // decision in Area A
[tick2] (s = 1) & (t = 0) & (a1 < Na) & 
(oil > 0) & (w1 = 0) -> (a1'=a1+1) & (oil'=oil-1);
[tick2] (s = 1) & (t = 0) & (a1 = Na) 
& (b1 < Nb) & (oil > 0) -> (s'=2) & (oil'=oil-1);
[tick2] (s = 1) & (t = 0) & (a1 < Na) & (b1 < Nb) 
	& (oil > 1) & (w1 = 1) & (w2 = 1) 
	        -> (a1'=a1+1) & (oil'=oil-2);
[tick2] (s = 1) & (t = 0) & (a1 < Na)  & (b1 < Nb)
	& (oil > 1) & (w1 = 1) & (w2 = 0) 
	  -> Ps: (a1'=a1+1) & (oil'=oil-2) + 
			(1-Ps): (s'=2) & (oil'=oil-1);

  // decision in Area B
[tick2] (s = 2) & (t = 0) & (b1 < Nb) & (oil > 0) 
& (w2 = 0) -> (b1'=b1+1) & (oil'=oil-1);
[tick2] (s = 2) & (t = 0) & (b1 = Nb) & (a1 < Na) 
& (oil > 0) -> (s'=1) & (oil'=oil-1);
[tick2] (s = 2) & (t = 0) & (b1 < Nb) & (a1 < Na) 
& (oil > 1) & (w2 = 1) & (w1 = 1) 
	     -> (b1'=b1+1) & (oil'=oil-2);
[tick2] (s = 2) & (t = 0) & (b1 < Nb) & (a1 < Na) 
& (oil > 1) & (w2 = 1) & (w1 = 0) 
	   -> Ps: (b1'=b1+1) & (oil'=oil-2) + (1-Ps): (s'=2) & (oil'=oil-1);
 [tick2] (s > 0) & (s < 3) & (oil = 0) -> (s'=3);

endmodule

module environment
  t: [0..1] init 1; // control weather change
  
  [tick1] (s > 0) & (s < 3) & (t = 1) -> (t' = 0);  
  [tick2] (s > 0) & (s < 3) & (t = 0) -> (t' = 1);  
endmodule

module weather1
	w1 : [0..1];

  [tick1] (s > 0) & (s < 3) & (t = 1) -> Pa:(w1' = 1) + (1-Pa):(w1'=0);
endmodule

module weather2
	w2 : [0..1];

  [tick1] (s > 0) & (s < 3) & (t = 1) -> Pb:(w2' = 1) + (1-Pb):(w2'=0);
endmodule
\end{lstlisting}

%
%



\bibliographystyle{IEEEtran}
\bibliography{IEEEabrv,references,bibliog}

\end{document}